\documentclass[a4paper]{article}
\usepackage{fullpage}
\usepackage{amsmath,amssymb,amsthm}
\usepackage{enumerate}
\usepackage{tikz}
\usetikzlibrary{matrix,calc,automata,graphs,graphs.standard}
\usepackage{fancybox}
\usepackage{fancybox,ascmac}
\newtheorem{theorem}{Theorem}
\newtheorem{lemma}{Lemma}
\newcommand{\trail}[1]{\mathcal{T}_{#1}}
\title{Ramsey Numbers of Trails}

\author{Masatoshi Osumi\thanks{The University of Electro-Communications}}
\date{\today}
\begin{document}
\maketitle

\begin{abstract}
We initiate the study of Ramsey numbers of trails.
Let $k \geq 2$ be a positive integer.
The Ramsey number of trails with $k$ vertices is defined as the
the smallest number $n$ such that
for every graph $H$ with $n$ vertices,
$H$ or the complete $\overline{H}$ contains a trail with $k$ vertices.
We prove that the Ramsey number of trails with $k$ vertices is at most
$k$ and at least $2\sqrt{k}+\Theta(1)$.
This improves the trivial upper bound of $\lfloor 3k/2\rfloor -1$.
\end{abstract}

\section{Introduction}

Ramsey theory is one of the topics in discrete mathematics that has been
studied over the years \cite{GRS,KR}.
For graphs, the Ramsey number was first studied for complete graphs, and
later it was studied for other classes of graphs such as paths, cycles, and
trees.
For graphs $G$ and $G'$, the Ramsey number of the pair $(G, G')$ is the 
smallest number $n$ such that 
for every graph $H$ with $n$ vertices, $H$ contains a copy of $G$ or
the complement $\overline{H}$ contains a copy of $G'$.
It is known that for every pair $(G, G')$ of finite graphs, the Ramsey number
of $(G, G')$ exists, 
and the determination of the Ramsey number is the ultimate goal.
However, even for complete graphs, the exact Ramsey number is not known:
When $G=G'=K_5$ we only know that the Ramsey number lies between
$43$ and $48$ \cite{survey}.

In this paper, we initiate the study of Ramsey numbers for trails.
Unlike paths, trails may have a repetition of vertices.
To study the Ramsey number of trails, we first fix the number of
vertices in a trail.
Let $k$ and $\ell$ be integers.
Then, the Ramsey number of trails with $k$ vertices and $\ell$ vertices
is defined as
the smallest number $n$ such that
for every graph $H$ with $n$ vertices,
$H$ contains a trail with $k$ vertices or
$\overline{H}$ contains a trail with $\ell$ vertices.

The ultimate goal is to determine the Ramsey number of trails.
Unfortunately, we are unable to provide a definite answer.
Nonetheless, we give a progress toward the ultimate goal.
We concentrate on the diagonal case, i.e., the case where $k=\ell$.
Our main theorems give an improved upper bound of $k$, and also a
lower bound of roughly $2\sqrt{k}$.
We note here that a trivial upper bound is $\lfloor 3k/2\rfloor -1$,
which will be sketched in the next section.

\section{Preliminaries}

In this paper, all graphs are finite, simple and undirected.
A \emph{graph} $G$ is defined as a pair $(V,E)$ of 
a finite set $V$ and $E \subseteq \{\{u,v\} \mid u,v \in V, u\neq v\}$, where
$V$ is the set of \emph{vertices} of $G$ and
$E$ is the set of \emph{edges} of $G$.
The \emph{degree} of a vertex $v \in V$ is the number of edges incident to
$v$, i.e., $|\{e \in E \mid v \in e\}|$.

A graph $G'=(V',E')$ is a \emph{subgraph} of a graph $G=(V,E)$ if
$V' \subseteq V$, $E' \subseteq E$ and $u,v\in V'$ for every $e =\{u,v\}\in E'$.
For a graph $G=(V,E)$, the \emph{complement} of $G$, denoted by $\overline{G}$,
is a graph with vertex set $V$ and edge set $\overline{E} = \{\{u,v\} \mid u,v \in V,u\neq v, \{u,v\} \notin E\}$.
Namely, $\overline{G} = (V,\overline{E})$.
A pair $(G,H)$ of graphs is called \emph{complementary} if
$H=\overline{G}$.

A graph is A \emph{complete} if each pair of vertices is
joined by an edge. The complete graph with $n$ vertices is denote by
$K_n$.
A graph $P=(V,E)$ is a \emph{path} if
$V = \{v_1,v_2,\dots,v_n\}$, and
$E = \{\{v_i,v_{i+1}\}\mid i\in\{1,2,\dots,n-1\}\}$.
The path with $n$ vertices is denote by $P_n$.

A \emph{walk} is a sequence $v_1e_1v_2\dots e_{k-1}v_k$ of vertices $v_i$ and edges $e_i$ such that for $1\leq i \leq k$, the edge $e_i=\{v_i, v_{i+1}\}$.
Here, $k$ is the number of vertices of the walk, 
$v_1$ and $v_k$ are called endpoints. 
A \emph{trail} is a walk in which all the edges are different from each other.
A trail that satisfies $v_1=v_k$ is called a \emph{circuit}.
A graph is \emph{connected} if it has a trail from any vertex to any other vertex.

Let $G$ be a connected graph.
An \emph{Eulerian circuit} of $G$ is a
circuit of $G$ that passes every edge exactly once. 
If $G$ has an Eulerian circuit, then $G$ is called \emph{Eulerian}.
An \emph{Eulerian trail} of $G$ is a trail of $G$ that passes every edge 
exactly once.
If $G$ has an Eulerian trail but no Eulerian circuit, 
then $G$ is called a \emph{semi-Eulerian}.
It is well-known and easy to prove that 
a connected graph $G$ is Eulerian if and only if
the degree of every vertex of $G$ is even, and
$G$ is semi-Eulerian if and only if
the number of odd-degree vertices is two.

For $k\geq 1$, 
we denote by $\trail{k}$ the set of connected graphs that have an Eulerian circuit or an Eulerian trail with $k$ vertices.
Note that in our definitions, vertices in trails and circuits are counted multiple times if they are passed multiple times.
Therefore, some graphs in $\trail{k}$ may have less than $k$ vertices.

\begin{figure}[t]
\centering
\begin{tikzpicture}[]
  \fill [white] (0,0);
  \fill [black] (0,0.5) circle (2pt);
  \fill [black] (0.5,0.5) circle (2pt);
  \fill [black] (1,0.5) circle (2pt);
  \fill [black] (1.5,0.5) circle (2pt);
  \draw[] (0,0.5) -- (0.5,0.5) -- (1,0.5) -- (1.5,0.5);
  
  \fill [black] (2.5,0) circle (2pt);
  \fill [black] (3.5,0) circle (2pt);
  \fill [black] (3,0.865) circle (2pt);
  \draw (2.5,0) -- (3.5,0) -- (3,0.865) --cycle;
\end{tikzpicture}
  
\caption{Graphs in $\trail{4}$. Note that the right graph has only three vertices, but it has a trail with four vertices.}
\label{trail}
\end{figure}
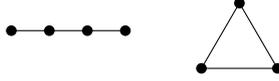 

Let $\mathcal{C}$ and $\mathcal{C}'$ be two graph classes, i.e., 
possibly infinite sets of graphs.
Then, the \emph{Ramsey number} of $\mathcal{C}$ and $\mathcal{C}'$ is
the smallest number $n$ such that
for every graph $H$ with $n$ vertices, 
$H$ contains a graph in $\mathcal{C}$ or 
$\overline{H}$ contains a graph in $\mathcal{C}'$.
The Ramsey number of $\mathcal{C}$ and $\mathcal{C}'$ is denoted by 
$R(\mathcal{C}, \mathcal{C}')$.
If $\mathcal{C}$ and $\mathcal{C}'$ are singletons (i.e., contain only one graph as $\mathcal{C}=\{G\}$ and $\mathcal{C}'=\{G'\}$), then the Ramsey number of
$\mathcal{C}$ and $\mathcal{C}'$ is denoted by
$R(G,G')$.



Gerencs\'{e}r and  Gy\'{a}rf\'{a}s \cite{GG} determined the exact value of
the Ramsey number of paths, as in the following theorem.
\begin{lemma}[\cite{GG}]
Let $k\geq \ell\geq 2$.
Then, 
$R(P_k,P_\ell)=k+\lfloor \ell/2 \rfloor-1$.
\end{lemma}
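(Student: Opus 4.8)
The plan is to establish the two bounds separately, reading any graph $H$ as the red graph and $\overline{H}$ as the blue graph. For the lower bound $R(P_k,P_\ell)\ge k+\lfloor \ell/2\rfloor-1$, I would exhibit a single graph on $k+\lfloor \ell/2\rfloor-2$ vertices with no red $P_k$ and no blue $P_\ell$. The natural candidate is the disjoint union $H=K_{k-1}\cup K_{\lfloor \ell/2\rfloor-1}$: every component has at most $k-1$ vertices, so $H$ contains no $P_k$, while $\overline{H}=K_{k-1,\,\lfloor \ell/2\rfloor-1}$ is complete bipartite. Since any path in a complete bipartite graph alternates between the two sides, the longest path of $\overline{H}$ has at most $2(\lfloor \ell/2\rfloor-1)+1=2\lfloor \ell/2\rfloor-1\le \ell-1$ vertices, so there is no blue $P_\ell$. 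The only things left to check are the elementary size constraints (in particular that the smaller side $\lfloor \ell/2\rfloor-1$ really is at most $k-2$), which follow from $k\ge \ell\ge 2$.

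For the upper bound $R(P_k,P_\ell)\le k+\lfloor \ell/2\rfloor-1$, I would fix a red/blue coloring of $K_n$ with $n=k+\lfloor \ell/2\rfloor-1$, assume there is no red $P_k$, and build a blue $P_\ell$. The starting point is a longest red path $P=v_1\cdots v_p$; the assumption forces $p\le k-1$, so the set $S$ of vertices outside $P$ has $|S|=n-p\ge \lfloor \ell/2\rfloor$. Maximality of $P$ means $v_1$ and $v_p$ have no red neighbor in $S$, hence both are blue-adjacent to all of $S$. This already yields a blue complete bipartite graph between the two endpoints and $S$, but with only two hubs the resulting blue paths have at most five vertices, so the crux is to manufacture more vertices that behave like endpoints.

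To obtain more usable hubs I would apply a rotation argument: if an endpoint of a longest red path is red-adjacent to an interior vertex $v_i$, then rerouting produces another longest red path whose new endpoint $v_{i\pm 1}$ again has all of its red neighbors on $P$, and is therefore blue-adjacent to every vertex of $S$. Iterating the rotations produces a set $T\subseteq V(P)$ of such pseudo-endpoints, each blue-complete to $S$, so the blue graph contains $K_{|T|,|S|}$ and hence a blue path on $2\min(|T|,|S|)+1$ vertices. Since $|S|\ge \lfloor \ell/2\rfloor$, it suffices to guarantee $|T|\ge \lfloor \ell/2\rfloor$, and this is exactly where I expect the main difficulty to lie: the hard part is the P\'osa-type counting that bounds $|T|$ from below, showing that either $T$ is large enough to finish or the red graph already contains a path on $k$ vertices, contradicting the assumption. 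A purely edge-counting substitute through the Erd\H{o}s--Gallai bound does not suffice here, since the extremal configurations for the two colors differ, so the structural rotation analysis seems unavoidable.
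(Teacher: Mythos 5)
You should first note that the paper contains no proof of this lemma at all: it is quoted verbatim from Gerencs\'er and Gy\'arf\'as \cite{GG}, so your attempt can only be measured against the literature. Your lower-bound half is correct and is exactly the standard extremal construction: $H=K_{k-1}\cup K_{\lfloor \ell/2\rfloor-1}$ on $k+\lfloor \ell/2\rfloor-2$ vertices has no $P_k$ since each component has at most $k-1$ vertices, and $\overline{H}=K_{k-1,\lfloor \ell/2\rfloor-1}$ has longest path on $2\lfloor \ell/2\rfloor-1\leq \ell-1$ vertices; the side conditions you defer are indeed immediate from $k\geq\ell\geq 2$.

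The upper bound, however, has a genuine gap, and the dichotomy you lean on (``either $T$ is large enough to finish or the red graph already contains a $P_k$'') is false as stated, not merely unproved. Concretely, take $n=k+\lfloor \ell/2\rfloor-1$, color red exactly the edges of one path $P$ on $k-1$ vertices, and color every other edge blue. There is no red $P_k$, but the endpoints of $P$ have red degree one, so not a single rotation is available and your set of pseudo-endpoints is $T=\{v_1,v_{k-1}\}$ with $|T|=2<\lfloor \ell/2\rfloor$ whenever $\ell\geq 6$. Your blue bipartite graph $K_{|T|,|S|}$ then yields a blue path on at most $2\min(|T|,|S|)+1=5$ vertices, while the blue $P_\ell$ that certainly exists in this coloring lives on blue edges inside $V(P)$ and inside $S$ --- edges your argument never examines. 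This shows that no P\'osa-type count depending only on the absence of a red $P_k$ can force $|T|\geq\lfloor \ell/2\rfloor$; any correct proof must exploit blue edges beyond those incident to rotation endpoints. The actual argument of Gerencs\'er and Gy\'arf\'as is instead an inductive case analysis around a longest red path (closely related to the fact that the vertex set of any $2$-colored complete graph can be covered by a red path and a blue path), so the missing step in your plan is the heart of the theorem, not a routine counting detail.
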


Since $P_k$ belongs to $\trail{k}$,
$R(\trail{k}, \trail{k}) \leq R(P_k, P_k)=k+\lfloor k/2 \rfloor-1=\lfloor 3k/2  \rfloor-1$.
This is the trivial upper bound mentioned in the previous section.

To gain the first impression, we have conducted a computer search of
the Ramsey number $R(\trail{k}, \trail{k})$ for small values of $k$.
This has been performed with the following procedure.
For $2\leq n\leq 7$, 
we generate all graphs $G$ with $n$ vertices.
For each such $G$, we calculate $\mathrm{t}(G)$, which is defined as the
number of vertices in the longest trail in $G$ or $\overline{G}$.
Then, we determine $\mathrm{value}(n)$, which is defined as the
minimum value of $\mathrm{t}(G)$ for all $G$ with $n$ vertices.
If $k$ satisfies $\mathrm{value}(n-1)< k \leq \mathrm{value}(n)$, then 
we know that $R(\trail{k}, \trail{k})$ is equal to $n$.

The result of the computer search is summarized in \tablename~\ref{table}.
We may observe that the upper bound of $\lfloor 3k/2 \rfloor-1$
should be improved.

\begin{table}[t]
  \centering
  \caption{The values of $R(\trail{k},\trail{k})$.}
  \begin{tabular}{c|cccccccccc}
    $k$ & 2 & 3 & 4 & 5 & 6 & 7 & 8 & 9 & 10 \\ 
    \hline
    $R(\trail{k}, \trail{k})$ & 2 & 3 & 4 & 5 & 6 & 6 & 6 & 7 & 7 \\
  \end{tabular}
  \label{table}
\end{table}

\section{Main Theorem: Lower Bound}

We begin with a lower bound of $R(\trail{k}, \trail{k})$.
\begin{theorem}
  \label{thm:lb}
  Let $k$ be a positive integer. Then, 
    \begin{align*}
    R(\trail{k},\trail{k}) \geq \begin{cases}
        k & \text{if } k \leq 6,  \\
        \displaystyle \left\lceil \frac{1+\sqrt{16k-7}}{2} \right\rceil  & \text{if }k\geq 7.\\
      \end{cases}
    \end{align*}
\end{theorem}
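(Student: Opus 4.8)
The plan is to prove the lower bound by an explicit extremal construction: for $n = f(k)-1$ I will build a graph $H$ on $n$ vertices such that neither $H$ nor $\overline{H}$ contains a trail with $k$ vertices, which gives $R(\trail{k},\trail{k}) \geq n+1$. The starting point is that a trail with $k$ vertices uses exactly $k-1$ edges, so, writing $\lambda(G)$ for the number of vertices on a longest trail of $G$, a graph $G$ avoids $\trail{k}$ precisely when $\lambda(G) \leq k-1$. Two facts drive the argument. First, $\lambda(G) \leq |E(G)| + 1$ always. Second, the bound improves to $\lambda(G) \leq |E(G)|$ unless all edges of $G$ lie in a single Eulerian or semi-Eulerian component; in particular, if $G$ is connected with at least four vertices of odd degree, then no trail traverses every edge, so $\lambda(G) \leq |E(G)|$.

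Guided by these facts, I would partition the edge set of $K_n$ into two classes as evenly as possible, of sizes $\lceil \binom{n}{2}/2 \rceil$ and $\lfloor \binom{n}{2}/2 \rfloor$, and let $H$ be the larger class so that $\overline{H}$ is the smaller one. The smaller class has at most $\lfloor \binom{n}{2}/2 \rfloor$ edges, hence longest trail at most $\lfloor \binom{n}{2}/2 \rfloor + 1 \leq \lceil \binom{n}{2}/2 \rceil$ vertices in the case $\binom{n}{2}$ odd. For the larger class I will insist that it be connected with at least four odd-degree vertices, forcing $\lambda(H) \leq \lceil \binom{n}{2}/2 \rceil$; when $\binom{n}{2}$ is even I impose the same condition on both classes. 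The upshot is $\max(\lambda(H), \lambda(\overline{H})) \leq \lceil \binom{n}{2}/2 \rceil$, so $H$ avoids $\trail{k}$ as soon as $\lceil \binom{n}{2}/2 \rceil \leq k-1$.

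The main technical obstacle is realizability: I must show that for every $n$ in the relevant range a balanced partition with the required number of odd-degree vertices exists. The delicate point is the coupling between the two sides—when $n$ is even, each vertex has odd degree in exactly one of $H$ and $\overline{H}$, so the odd-degree budgets of the two classes are linked and must be distributed correctly. I expect to handle this either by writing down an explicit degree sequence and invoking a standard realizability criterion, or by starting from an arbitrary balanced graph and applying edge swaps (deleting an edge from one class and adding a non-edge, which flips four degrees) to raise the number of odd-degree vertices to four without changing the edge counts; since a balanced graph generically has many odd-degree vertices, only a few special configurations need adjustment. This works for odd $n \geq 5$ and even $n \geq 6$, while the finitely many small cases need separate attention.

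It then remains to match the construction to the closed form. Writing $n+1 = f(k) = \lceil (1+\sqrt{16k-7})/2 \rceil$, the ceiling gives $n < (1+\sqrt{16k-7})/2$, and squaring and simplifying yields $\binom{n}{2} \leq 2k-2$, hence $\lceil \binom{n}{2}/2 \rceil \leq k-1$; this is exactly the inequality needed for the construction to avoid $\trail{k}$, so $R(\trail{k},\trail{k}) \geq n+1 = f(k)$ for all $k \geq 7$ (for which $f(k) \geq 6$ and thus $n \geq 5$). Finally, the bound $R(\trail{k},\trail{k}) \geq k$ for $k \leq 6$ is a finite check on graphs with at most five vertices, already exhibited by the computer search underlying Table~\ref{table}. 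I expect the parity and realizability bookkeeping of the third paragraph, rather than any single inequality, to be the crux of the argument.
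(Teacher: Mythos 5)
Your proposal is correct in outline and shares the paper's arithmetic skeleton, but realizes the key construction by a genuinely different method. Both arguments reduce the bound to splitting the $\binom{n}{2} \leq 2k-2$ edges of $K_n$ into two classes of at most $k-1$ edges each, and both exploit the fact that a class with exactly $k-1$ edges is still harmless unless all of its edges form an Eulerian or semi-Eulerian subgraph, which is blocked by arranging at least four odd-degree vertices; your closing algebra, deducing $\binom{n}{2} \leq 2k-2$ from $n+1 = \lceil (1+\sqrt{16k-7})/2 \rceil$, plays exactly the role of Lemma~\ref{lem:m-complete} in the paper (and is, if anything, cleaner than the paper's glue between Lemma~\ref{lem:lb-k7} and the theorem). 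Where you diverge is in producing the split: the paper builds it explicitly, taking a Hamiltonian cycle $C$ of $K_n$, extending $C$ by an initial segment of an Eulerian trail of the leftover graph (made even by deleting a perfect matching when $n$ is even) to an Eulerian or semi-Eulerian graph $S$ with $k+1$ edges, and then deleting two cycle edges to create exactly four odd-degree vertices; the complement is then controlled by precisely the parity coupling you identify, namely that for odd $n$ the odd-degree sets of the two classes coincide, while for even $n$ one checks $n \geq 8$, so the $n-4$ even-degree vertices of $S'$ yield at least four odd-degree vertices in $\overline{S'}$. Your alternative --- a generic balanced split repaired by edge swaps or realized from a prescribed degree sequence --- is viable and arguably more robust, since it avoids the paper's explicit cycle-plus-trail surgery, but it is also the one step you have not executed: you still owe the existence of a suitable disjoint swap pair in degenerate configurations and the verification that the coupled odd-degree budgets can be met. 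Two observations would shrink that bookkeeping considerably. First, the four-odd-vertex condition is only \emph{needed} for a class with exactly $k-1$ edges, i.e.\ only when $\binom{n}{2} \in \{2k-3, 2k-2\}$, which for $k \geq 7$ forces $n \geq 6$, so most of your ``small cases'' never arise. Second, when $n$ is even the two budgets are not really independent: arranging \emph{exactly} four odd-degree vertices on one side automatically gives $n-4 \geq 4$ on the other once $n \geq 8$, and the two-sided case $\binom{n}{2} = 2k-2$ with $n$ even indeed forces $n \geq 8$ --- the same observations that close the paper's Case~3. Your treatment of $k \leq 6$ by a finite check is legitimate, though a complete write-up should exhibit the witnessing complementary pairs as the paper does.
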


The rest of the section is devoted to the proof of Theorem \ref{thm:lb}.
We first consider the case when $k\leq 6$.

Let $k=2$.
Then, there is no graph of $\trail{2}$ in the complete graph $K_1$. Therefore, $R(\trail{2},\trail{2})\geq 2$.

Let $k=3$.
Then, the complete graph $K_2$ has only one edge.
So, there is no graph of $\trail{3}$ in $K_2$. Thus, $R(\trail{3},\trail{3})\geq 3$.

Let $k=4$.
Consider the complementary pair of graphs with three vertices as shown in \figurename~\ref{k4}.
Since those two graphs have at most two edges, no element of $\trail{4}$ 
is contained in either graph. 
Therefore, $R(\trail{4},\trail{4})\geq 4$. 

\begin{figure}[t]
  \centering
  \begin{tikzpicture}[]
    \fill [black] (0,0) circle (2pt);
    \fill [black] (1,0) circle (2pt);
    \fill [black] (0.5,0.865) circle (2pt);
    \draw[] (0,0) -- (0.5,0.865);
    
    \draw[] (1.5,-0.5) -- (1.5,1.5);
    
    \fill [black] (2,0) circle (2pt);
    \fill [black] (3,0) circle (2pt);
    \fill [black] (2.5,0.865) circle (2pt);
    \draw[] (2,0) -- (3,0) -- (2.5,0.865);
  \end{tikzpicture}
  
  
  \caption{A complementary pair of graphs that contain no elements of $\trail{4}$.}
  \label{k4}
\end{figure}
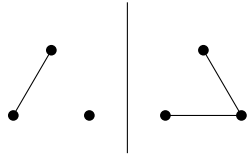

Let $k=5$.
Consider the complementary pair of graphs with four vertices as shown in \figurename~\ref{k5}.
Since those two graphs have three edges, 
no element of $\trail{5}$ is contained in either graph. 
Hence, $R(\trail{5},\trail{5})\geq 5$. 
\begin{figure}[t]
  \centering
  \begin{tikzpicture}[]
    \fill [black] (0,0) circle (2pt);
    \fill [black] (1,0) circle (2pt);
    \fill [black] (1,1) circle (2pt);
    \fill [black] (0,1) circle (2pt);
    \draw[] (0,0) -- (0,1) -- (1,1);
    \draw[] (0,1) -- (1,0);
    
    \draw[] (1.5,-.5) -- (1.5,1.5);
    
    \fill [black] (2,0) circle (2pt);
    \fill [black] (3,0) circle (2pt);
    \fill [black] (3,1) circle (2pt);
    \fill [black] (2,1) circle (2pt);
    \draw[] (2,0) -- (3,0) -- (3,1) -- cycle;
  \end{tikzpicture}
    
    
  \caption{A complementary pair of graphs that contain no elements of $\trail{5}$.}
  \label{k5}
\end{figure}
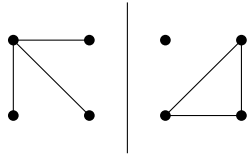

Let $k=6$.
Consider the complementary pair of graphs with five vertices as shown in \figurename~\ref{k6}.
Those graphs have five edges. 
Hence, for an element of $\trail{6}$ to be contained in either of the two graphs, one of the two graphs must be Eulerian or semi-Eulerian. 
However, each graph has four odd-degree vertices.
Thus, these graphs are neither Eulerian nor semi-Eulerian, and have no elements of $\trail{6}$.

\begin{figure}[t]
  \centering
  \begin{tikzpicture}[]
    \coordinate [rotate=18] (name) at (72:1);
    \coordinate [rotate=18] (b) at (144:1);
    \coordinate [rotate=18] (c) at (216:1);
    \coordinate [rotate=18] (d) at (288:1);
    \coordinate [rotate=18] (e) at (0:1);
    
    \fill [black,rotate=18]  (72:1) circle (2pt);
    \fill [black,rotate=18] (144:1) circle (2pt);
    \fill [black,rotate=18] (216:1) circle (2pt);
    \fill [black,rotate=18] (288:1) circle (2pt);
    \fill [black,rotate=18] (0:1) circle (2pt);
    \draw[rotate=18] (0:1) -- (72:1) -- (144:1) -- cycle;
    \draw[rotate=18] (0:1) -- (288:1);
    \draw[rotate=18] (144:1) -- (216:1);
    
    \draw[] (1.5,-1) -- (1.5,1.5);

    \fill [black] (3,0) + (18:1) circle (2pt);
    \fill [black] (3,0) + (90:1) circle (2pt);
    \fill [black] (3,0) + (162:1) circle (2pt);
    \fill [black] (3,0) + (234:1) circle (2pt);
    \fill [black] (3,0) + (306:1) circle (2pt);
    
    \draw (3,0) +(90:1) -- +(234:1) -- +(306:1) --cycle;
    \draw (3,0) + (234:1) -- +(18:1);
    \draw (3,0) + (306:1) -- + (162:1);
  \end{tikzpicture}
    
    

    
  \caption{A complementary pair of graphs that contain no elements of $\trail{6}$.}
  \label{k6}
\end{figure}
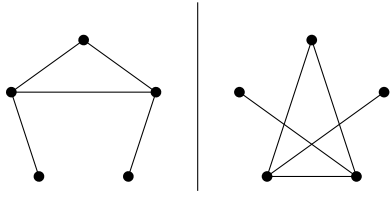


Next, we consider the case where $k \geq 7$.
To complete the proof, we use the following two lemmas.

\begin{lemma}
  \label{lem:m-complete}
  The number of vertices of a complete graph with $m\geq 0$ edges is $(1+\sqrt{1+8m})/2$.
\end{lemma}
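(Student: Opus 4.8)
The plan is to express the number of edges of a complete graph as a function of its number of vertices and then invert that relation. First I would recall the elementary fact that the complete graph $K_n$ on $n$ vertices has exactly $\binom{n}{2} = n(n-1)/2$ edges, since each edge corresponds to an unordered pair of distinct vertices. Writing $n$ for the number of vertices, the hypothesis that the graph has $m$ edges then becomes the single equation $n(n-1)/2 = m$.

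Next I would clear the denominator and rearrange this into the quadratic equation $n^2 - n - 2m = 0$ in the unknown $n$. Solving by the quadratic formula gives $n = (1 \pm \sqrt{1 + 8m})/2$. Since $n$ counts vertices it must be nonnegative, whereas the root taken with the minus sign is negative for every $m \geq 0$; discarding it leaves $n = (1 + \sqrt{1+8m})/2$, exactly as claimed.

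There is essentially no real obstacle here: the statement is an immediate consequence of the edge count of $K_n$ together with the quadratic formula. The only point requiring even the slightest care is the sign choice when extracting the square root, and this is settled at once by the nonnegativity of $n$. For later use in the argument one may additionally observe that this value is a (positive) integer precisely when $m = \binom{n}{2}$ for some integer $n$, that is, exactly when $m$ is a triangular number.
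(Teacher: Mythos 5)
Your proof is correct and follows essentially the same route as the paper's: both equate $m$ with $n(n-1)/2$ and solve the resulting quadratic for $n\geq 1$, with your version merely spelling out the quadratic formula and the sign choice that the paper leaves implicit.
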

\begin{proof}
  Let $n\geq 1$ be the number of vertices of a complete graph with $m$ edges. In this case, $m = n(n-1)/2$. Solving for $n \geq 1$, we have $n=(1+ \sqrt{1+8m})/2$.
\end{proof}

\begin{lemma}
  \label{lem:lb-k7}
  Let $k\geq 7$ and let $n$ be such that the complete graph $K_n=(V,E)$ 
  has at most $2k-2$ edges.
  Then, there exists a subgraph $G=(V,E_1)$ of $K_n$ 
  such that $G$ and $\overline{G}$ have no element of $\trail{k}$.
\end{lemma}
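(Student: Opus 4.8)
The plan is to translate the condition ``$G$ has no element of $\trail{k}$'' into a statement about the longest trail of $G$, and then to exhibit a choice of $G\subseteq K_n$ in which both $G$ and $\overline{G}$ have short longest trails. First I would record the basic correspondence. A walk on $k$ vertices uses exactly $k-1$ edges, so a graph lies in $\trail{k}$ precisely when it is connected, has exactly $k-1$ edges, and has at most two vertices of odd degree (the Eulerian/semi-Eulerian condition quoted above). If $G$ contains a trail with more than $k-1$ edges, then the initial segment of that trail spanning $k-1$ edges is again a trail, and its edge set is a connected subgraph with at most two odd-degree vertices, hence an element of $\trail{k}$. Consequently $G$ contains an element of $\trail{k}$ if and only if the longest trail of $G$ uses at least $k-1$ edges, so the goal becomes: choose $G$ so that the longest trail of each of $G$ and $\overline{G}$ uses at most $k-2$ edges. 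I will use the elementary fact that the number of edges of a longest trail equals the maximum number of edges of a connected subgraph with at most two odd-degree vertices; as a corollary, a connected graph with $m$ edges and at least four odd-degree vertices has longest trail at most $m-1$, and a graph all of whose components have at most $m-1$ edges has longest trail at most $m-1$.

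Next I would exploit the edge budget. Writing $m=\binom{n}{2}\le 2k-2$ for the number of edges of $K_n$, any choice of $G$ splits these into $|E(G)|$ and $|E(\overline{G})|=m-|E(G)|$. Taking $G$ with a balanced share, each of $G,\overline{G}$ has at most $\lceil m/2\rceil\le k-1$ edges, and any class with at most $k-2$ edges is automatically safe. Hence when $m\le 2k-4$ a balanced split already works, and the only cases requiring care are $m=2k-3$ and $m=2k-2$, where one or both classes are forced to carry exactly $k-1$ edges.

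For these boundary cases the target is to make every class carrying $k-1$ edges either disconnected or equipped with at least four odd-degree vertices; by the corollary above this forces its longest trail down to at most $k-2$. When $m=2k-2$ both classes carry $k-1$ edges; here $\binom{n}{2}$ is even, so $n\equiv 0,1\pmod 4$ and self-complementary graphs exist. Taking $G$ self-complementary gives $G\cong\overline{G}$, so it suffices to control one of them, and when $n$ is even the isomorphism forces exactly $n/2\ge 4$ odd-degree vertices in each (note $k\ge 7$ already forces $n\ge 8$ here), settling this subcase. When $m=2k-3$ only one class carries $k-1$ edges, and I would simply place those edges so as to leave at least four odd-degree vertices (or to disconnect that class), putting the remaining $k-2$ edges in the other class, which is then automatically safe.

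The main obstacle I anticipate is the parity-and-connectivity bookkeeping that remains: for $n$ odd a vertex has the same degree-parity in $G$ and in $\overline{G}$, whereas for $n$ even the odd-degree vertex sets of $G$ and $\overline{G}$ partition $V$, so the ``four odd vertices'' requirement must be secured differently in the two cases, and the self-complementary route still needs a graph with at least four odd-degree vertices when $n$ is odd. The robust way to finish, avoiding any appeal to self-complementary graphs of prescribed degree sequence, is to begin from an arbitrary balanced split and apply a single edge exchange between the two classes that raises the number of odd-degree vertices of an offending $(k-1)$-edge class to at least four while preserving the safety of the other class; proving that such an exchange always exists, using that $n$ is large enough (guaranteed by $k\ge 7$), is the technical heart of the argument. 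Finally I would confirm that the finitely many small $n$ at which these constructions degenerate, such as $n\le 5$, correspond only to $k<7$ and are excluded by hypothesis.
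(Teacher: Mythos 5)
Your reduction is sound: translating ``no element of $\trail{k}$'' into ``longest trail has at most $k-2$ edges,'' noting that a class with at most $k-2$ edges is automatically safe, and aiming to give any $(k-1)$-edge class at least four odd-degree vertices is exactly the strategy the paper follows, and your balanced split disposes of $m\leq 2k-4$ just as the paper's Case 1 does. Your self-complementary trick for $m=2k-2$ with $n\equiv 0\pmod 4$ is correct and genuinely different from the paper (the parity argument forcing exactly $n/2\geq 4$ odd-degree vertices in each class is valid, and it handles both classes at once). However, there is a genuine gap: in the remaining boundary cases, which are the entire content of the lemma, you never produce the construction. For $m=2k-3$ you say you would ``simply place'' $k-1$ edges so as to leave four odd-degree vertices, but since $n$ is only about $2\sqrt{k}$, naive placements (e.g., vertex-disjoint paths, which would need roughly $k+1$ vertices) do not fit, so existence genuinely requires an argument. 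For $m=2k-2$ with $n\equiv 1\pmod 4$, an arbitrary self-complementary graph can fail outright: the Paley graph on $9$ vertices is self-complementary and $4$-regular, hence Eulerian with $18$ edges, so it contains a trail with $19=k$ vertices when $\binom{9}{2}=36=2k-2$. You acknowledge both problems and propose a ``single edge exchange'' whose existence you explicitly defer as ``the technical heart of the argument'' --- but that deferred step is precisely what the lemma asks you to prove, and it is not obviously routine: a swap perturbs degrees in both classes simultaneously, and when $m=2k-2$ the other class also carries $k-1$ edges and must stay safe.

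For comparison, the paper fills this hole with an explicit construction: it takes a Hamiltonian cycle $C$ of $K_n$, observes that $K_n$ minus $C$ (minus a perfect matching of ``diagonals'' $\{v_i,v_{n/2+i}\}$ when $n$ is even) is Eulerian, extracts from it a trail $T$ with $k+1-n$ edges, sets $S=C\cup T$ (Eulerian or semi-Eulerian with $k+1$ edges), and then deletes two suitable edges of $C$ to obtain $S'$ with exactly $k-1$ edges and exactly four odd-degree vertices; for $m=2k-2$ a separate parity argument (all degrees of $K_n$ are even when $n$ is odd; $n\geq 8$ when $n$ is even, so $S'$ has $n-4\geq 4$ even-degree vertices that become odd in $\overline{S'}$) shows that $\overline{S'}$ also has at least four odd-degree vertices. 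One smaller caution: your fallback ``or to disconnect that class'' is not by itself sufficient, since a disconnected graph whose $k-1$ edges all lie in one Eulerian component still contains a trail with $k$ vertices; your earlier corollary was phrased carefully (every component with at most $k-2$ edges), but the later informal use drops that condition.
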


Before proving Lemma \ref{lem:lb-k7}, we finish the proof of Theorem
\ref{thm:lb} using Lemmas \ref{lem:m-complete} and \ref{lem:lb-k7}.

\begin{proof}[Proof of Theorem \ref{thm:lb} when $k\geq 7$.]
Let $k\geq 7$ and let $n$ be the number of vertices of a complete graph with 
 at most $2k-1$ edges.
Then, $R(\trail{k},\trail{k})\geq n$ from Lemma \ref{lem:lb-k7}.
Therefore, by Lemma \ref{lem:m-complete}
\[
  R(\trail{k},\trail{k})\geq 
  \left\lceil \frac{1+\sqrt{1+8(2k-1)}}{2} \right\rceil
  =
  \left\lceil \frac{1+\sqrt{16k-7}}{2} \right\rceil.
\qedhere
\]
\end{proof}

Thus, it suffices to prove Lemma \ref{lem:lb-k7}.
\begin{proof}[Proof of Lemma \ref{lem:lb-k7}.]
We distinguish the cases $|E| \leq 2k-4$, $|E|=2k-3$ and $|E|=2k-2$.

\paragraph{Case 1: $|E|\leq 2k-4$.}
Choose $G=(V,E_1)$ as any subgraph with $|E_1|=\left\lceil |E|/2\right\rceil$.
Then, since $|E_1|= \left\lceil |E|/2 \right\rceil \leq k-2,|E-E_1|= \left\lfloor |E|/2\right\rfloor\leq k-2$, 
it follows that $G$ and $\overline{G}$ have no element of $\trail{k}$.

\paragraph{Case 2: $|E|=2k-3$.}
Consider a complete graph $K_n$ with $2k-3$ edges.
Then, 
\[
n = \frac{1+ \sqrt{1+8(2k-3)}}{2}=\frac{1+\sqrt{16k-23}}{2} > 5
\]
from Lemma \ref{lem:m-complete}.
Let $v_1, \dots, v_n$ be the vertices of $K_n$, 
and let $E_c=\{\{v_i,v_{i+1}\}\mid i=1,2,\dots,n-1\}\cup \{v_n,v_1\}$.
Then, the graph $C=(V,E_c)$ is a cycle contained in $K_n$.
We construct an Eulerian or a semi-Eulerian graph $S$ with $k+1$ edges 
that contains $C$.

Before constructing such $S$, 
we observe that this is enough for our purpose.
Since $S$ contains $C$ and $n\geq 6$, 
there exist two edges $e_1, e_2$ of $C$ such that
$S' = S - \{e_1, e_2\}$ has exactly four odd-degree vertices.
Thus, $S'$ is neither Eulerian nor semi-Eulerian.
Since $S'$ has only $k-1$ edges, $S'$ include no element of $\trail{k}$.
Further, $\overline{S'}$ has only $k-2$ edges, and $\overline{S'}$ includes
no element of $\trail{k}$, either.




To find a subgraph $S$ with the desired properties, 
we further distinguish two cases according to the parity of $n$.

\paragraph{Case 2-1: $n$ is odd.}
Let $G=(V,E-E_c)$.
Then, $G$ is Eulerian since the degree of each vertex of $G$ is even and $G$ is connected.
Thus, $G$ contains a trail with $|E|-n+1 = 2k-2-n \geq k+2-n$ vertices. 
Let $T$ be a subgraph of $G$ obtained by
the first $k+1-n$ edges of such a trail, and
let $S=C\cup T$.
Then, $S$ is Eulerian or semi-Eulerian with $k+1$ edges.

\paragraph{Case 2-2: $n$ is even.}
Let $G=(V,E-E_c-\{\{v_i,v_{n/2+i}\}\mid i=1,2,\dots,n/2\})$.
Then, $G$ is Eulerian since the degree of each vertex of $G$ is even
and $G$ is connected.

Thus, $G$ contains a trail with $|E|-3n/2+1 = 2k-2-3n/2 \geq k+2-n$ vertices since $n>5$ and $n(n-1)/2 = 2k-3$. 
Let $T$ be a subgraph of $G$ obtained by the first $k+1-n$ edges of such a trail, and let $S=C\cup T$.
Then, $S$ is Eulerian or semi-Eulerian with $k+1$ edges.

\paragraph{Case 3: $|E|=2k-2$.}
This case is analogous to Case 2 where $|E|=2k-3$.
Note that for
a complete graph $K_n$ with $2k-2$ edges.
we have $n = (1+ \sqrt{1+8(2k-2)})/2=(1+\sqrt{16k-15})/2 > 5$ from Lemma \ref{lem:m-complete}.

We have to take care of the argument after constructing $S$ because
$\overline{S'}$ has $k-1$ edges and we need a different argument to
show that $\overline{S'}$ includes no element of $\trail{k}$.
Remind that $S'$ contains $k-1$ edges from a trail of $G$ and
the edges of $C$.

We distinguish two cases according to the parity of $n$.
First, let $n$ be odd.
Then, the degree of every vertex of $K_n$ is even.
Since $S'$ has four odd-degree vertices,
$\overline{S'}$ has four odd-degree vertices, too.
Thus, $\overline{S'}$ is neither Eulerian nor semi-Eulerian.
Since $\overline{S'}$ has only $k-1$ edges,
$\overline{S'}$ includes no element of $\trail{k}$.

Next, let $n$ be even.
We first observe that $n\geq 8$.
We already know that $n\geq 6$, but if
$n=6$, then the number of edges of $K_n$ is $15$, which is not of the
form $2k-2$: this is impossible.
Therefore, $\overline{S'}$ has at least four odd-degree vertices
since $S'$ has $n-4$ even-degree vertices and $n-4 \geq 4$.
Thus, $\overline{S'}$ is neither Eulerian nor semi-Eulerian.
Since $\overline{S'}$ has only $k-1$ edges,
$\overline{S'}$ includes no element of $\trail{k}$.
\end{proof}

\section{Main Theorem: Upper Bound}

We already observed that $R(\trail{k}, \trail{k}) \leq \lfloor 3k/2\rfloor -1$
as a trivial upper bound.
Now, we improve the upper bound in the next theorem.

\begin{theorem}
\label{thm:ub}
For every integer $k\geq 2$,
\[
R(\trail{k}, \trail{k}) \leq k.
\]
\end{theorem}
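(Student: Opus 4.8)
The plan is to first restate the goal in the Eulerian language developed above. A graph contains an element of $\trail{k}$ precisely when it has a trail with at least $k-1$ edges: the edge set of such a trail is a connected subgraph in which only the two endpoints can have odd degree, and conversely any connected subgraph with at most two odd-degree vertices is Eulerian or semi-Eulerian, so its Eulerian trail uses all of its (at least $k-1$) edges and hence visits at least $k$ vertices, truncating to the first $k-1$ edges if there are more. Thus Theorem~\ref{thm:ub} is \emph{equivalent} to the statement that for every graph $H$ on $k$ vertices, at least one of $H$, $\overline{H}$ contains a connected subgraph with at most two odd-degree vertices and at least $k-1$ edges. Note that this subgraph need not be spanning, which is essential: it is exactly the freedom to repeat vertices (and to use fewer than $k$ of them) that should let trails beat the path bound.

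First I would use that, for any graph, $H$ or $\overline{H}$ is connected, so by the complementation symmetry of the statement I may try to produce the required trail inside a connected one of the two graphs, say $G$. I would take a longest trail $W$ in $G$ and argue by contradiction that $W$ already has $k-1$ edges. If it is shorter, maximality forces every $G$-edge incident to an endpoint of $W$ to lie on $W$ (otherwise $W$ extends), and similarly Pósa-type rotations enlarge the set of attainable endpoints. The key step is a splicing argument: taking a vertex $v$ outside the support of $W$ and a single edge of the appropriate colour joining $v$ to an (attainable) endpoint of $W$, I would extend $W$ by one edge and contradict its maximality. Since an open trail has exactly two odd-degree endpoints, this is precisely the parity slack needed to attach a new edge while preserving the trail property; this also parallels the extremal maximal-path method behind Gerencs\'er--Gy\'arf\'as's path result quoted in Lemma~1.

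The main obstacle is that this extension can fail when $v$ has no edge of the correct colour to any attainable endpoint of $W$, and in particular when the only long trail is forced to live in the \emph{complement}. This is the phenomenon exhibited by the star $H=K_{1,k-1}$, whose longest trail has just two edges while the only long trail sits in the complementary clique $K_{k-1}$; hence the argument cannot simply remain inside one colour, and the high-degree structure that kills trails in one graph is exactly what creates the clique supplying them in the other. I expect to resolve this by choosing $v$ according to its degree, so that in whichever colour carries $W$ it is adjacent to almost every vertex and therefore to an attainable endpoint, together with a separate treatment of the case where both $H$ and $\overline{H}$ are connected. An equivalent route is a strong induction on $k$ (base cases matching the computer-search table): remove such a vertex $v$, apply the hypothesis to the $(k-1)$-vertex graph to obtain a trail with $k-2$ edges, and extend it by one edge at $v$; to make the extension always available one must strengthen the hypothesis to control an endpoint of the produced trail and to track which of the two colours it lies in. Carrying this parity-and-connectivity bookkeeping through both colours simultaneously is the hard part of the proof.
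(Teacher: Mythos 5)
There is a genuine gap: your outline correctly lands on the paper's inductive skeleton (base cases from the computer-search table, then pass to a $(k-1)$-vertex subgraph, obtain a trail $S$ with $k-1$ vertices in $G$ or $\overline{G}$, and try to extend it by one edge), but the entire substance of the proof is what to do when the extension fails, and at exactly that point you defer: you propose either choosing the removed vertex by degree or strengthening the induction hypothesis ``to control an endpoint of the produced trail and to track which of the two colours it lies in,'' and you concede this bookkeeping is ``the hard part.'' That strengthened hypothesis is never formulated, let alone shown to propagate through the induction, and it is far from clear it can: the colour carrying the long trail may flip between $k-1$ and $k$, and prescribing an endpoint is a substantially stronger demand than the theorem itself (the paper's Case 3-3-2 shows how delicate endpoint control is even in a much more local situation). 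As written, your argument proves the extendable case and labels the non-extendable case as future work.

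The paper resolves the non-extendable case without any strengthening of the hypothesis. If $S$ (say in $G$, with endpoints $u_1,u_{k-1}$, vertex support $U$, and $W=V-U$) cannot be extended, then every edge from $u_1$ or $u_{k-1}$ to $W$, and every edge from $u_1$ or $u_{k-1}$ not already used by $S$, lies in $\overline{G}$ (Conditions 1 and 2). This converts the obstruction into dense, explicitly located adjacency in the complement, and the proof then splits on the shape of $S$: if $S$ is a path, the two endpoints have degree about $k-3$ in $\overline{G}$ and a connected subgraph with at most two odd-degree vertices and $2k-7$ edges is exhibited directly; if $S$ is a circuit, every $U$--$W$ pair must be a $\overline{G}$-edge (else the circuit splices open to length $k$), and a double star $K_{2,m}$ with $m=\max(|U|,|W|)$ gives a trail with $2m+1\geq k+1$ vertices in $\overline{G}$; in the intermediate case $\lfloor k/2\rfloor<|U|\leq k-3$ the paper applies the induction hypothesis a \emph{second} time, inside $W$, and combines the resulting trail with the bipartite Lemma \ref{lem} (three-vertex side $A\subseteq W$, degree-two side $B\subseteq U$) to assemble a trail with at least $k$ vertices, again in $\overline{G}$. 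None of this appears in your proposal; your Eulerian-language reformulation (trail with $k$ vertices $\Leftrightarrow$ connected subgraph with at most two odd-degree vertices and at least $k-1$ edges, truncating if longer) is correct and is indeed the engine of all these constructions, but the constructions themselves — and the case analysis selecting among them — are the missing content.
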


To this end, 
for any graph $G$ with $k$ vertices, 
we prove either $G$ or its complement $\overline{G}$ contains a trail with $k$ vertices.

We begin with the following lemma which will be used in the proof of the theorem.
\begin{lemma}
  \label{lem}
  Let $G=(V,E)$ be a bipartite graph with partite sets $A$ and $B$,
  i.e., $A\cup B=V$, $A\cap B=\emptyset$ and each edge of $G$ joins a vertex of $A$ and a vertex of $B$.
  If $|A|=3$ and the degree of every vertex of $B$ is two,
  then $G$ contains a trail such that both endpoints belong to $A$ and the
  number of edges is $2|B|$.
\end{lemma}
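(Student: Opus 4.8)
The plan is to read the required trail as an \emph{Eulerian trail}. Since every vertex of $B$ has degree two, the number of edges of $G$ equals $2|B|$ (each edge is incident to a unique vertex of $B$), so any trail using $2|B|$ edges must traverse every edge exactly once. Hence it suffices to exhibit such a trail with both endpoints in $A$, and I would obtain it from the classical characterization of Eulerian and semi-Eulerian graphs recalled in the Preliminaries, applied to the connected subgraph carrying all the edges of $G$.

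First I would control the odd-degree vertices. Every vertex of $B$ has even degree two, so all odd-degree vertices of $G$ lie in $A$. By the handshake lemma their number is even, and since $|A|=3$ it must be $0$ or $2$. This is exactly the parity condition required for an Eulerian circuit (when the number is $0$) or for an Eulerian trail with prescribed endpoints (when the number is $2$); crucially, in the latter case those two endpoints are forced to be the odd-degree vertices, which both lie in $A$.

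The key step is connectivity, and here the hypothesis $|A|=3$ is decisive. As $G$ is simple and bipartite, each $b\in B$ is joined to exactly two distinct vertices of $A$, that is, to one of the three pairs $\{a_1,a_2\}$, $\{a_1,a_3\}$, $\{a_2,a_3\}$, and any two of these three pairs share a vertex of $A$. Thus for any $b,b'\in B$, whose pairs intersect in some $a\in A$, the sequence $b\,a\,b'$ is a walk in $G$; consequently all vertices of $B$ lie in a single connected component, and every vertex of $A$ of positive degree is adjacent to some vertex of $B$ and hence lies in the same component. Therefore the subgraph $G'$ consisting of all edges of $G$ together with their endpoints is connected. (When $|B|=0$ the claim is trivial, a single vertex of $A$ serving as the empty trail.)

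Finally I would combine these facts. If $A$ contains no odd-degree vertex, then $G'$ is Eulerian; traversing its Eulerian circuit starting from any positive-degree vertex of $A$ yields a closed trail on all $2|B|$ edges whose two coinciding endpoints lie in $A$. If $A$ contains exactly two odd-degree vertices, then $G'$ is semi-Eulerian and its Eulerian trail has precisely those two vertices of $A$ as endpoints. In either case $G$ contains a trail with $2|B|$ edges and both endpoints in $A$, as required. I do not anticipate a genuine obstacle: the whole argument hinges on the connectivity claim, which would really fail for $|A|\geq 4$ (the edges could split into two vertex-disjoint pairs) but is forced here by $|A|=3$.
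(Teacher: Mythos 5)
Your proof is correct and takes essentially the same route as the paper: both arguments reduce the statement to the Eulerian/semi-Eulerian characterization by verifying the parity condition (all odd-degree vertices lie in $A$, so by the handshake lemma there are $0$ or $2$ of them, and in the semi-Eulerian case they are forced to be the endpoints) together with connectivity of the subgraph carrying the edges. The only difference is cosmetic: where the paper splits into two cases according to whether $A$ has an isolated vertex, you establish connectivity uniformly by observing that each vertex of $B$ has as neighborhood a $2$-subset of the $3$-set $A$ and any two such subsets intersect --- a slightly cleaner argument that also covers the degenerate case $|B|=0$ explicitly.
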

\begin{proof}
  Denote the three elements of $A$ by $a_1$,$a_2$, and $a_3$.
  We distinguish the following two cases according to the existence of an isolated vertex (i.e., a vertex of degree zero) in $A$.

  \paragraph{Case 1: $A$ has an isolated vertex.}
  Without loss of generality, assume that $a_3$ is an isolated vertex.
  Since each vertex in $B$ has degree two, it is adjacent to $a_1$ and $a_2$.
  Hence, the bipartite graph $G' = G - a_3$ is connected. 
  Furthermore, the number of odd-degree vertices in $G'$ is zero or two
  since the degree of $a_1$ and $a_2$ is $|B|$, and the degree of every 
  vertex in $B$ is two.
  Thus, $G'$ is Eulerian or semi-Eulerian and has $2|B|$ edges.
  When $G'$ is Eulerian, $G'$ contains a trail with $2|B|$ edges such that both endpoints coincide with $a_1$.
  When $G'$ is semi-Eulerian, $G'$ contains a trail with $2|B|$ edges such that one endpoint is $a_1$ and the other endpoint is $a_2$.

  \paragraph{Case 2: $A$ has no isolated vertex.}
  Without loss of generality, assume that there exists a vertex 
  $b\in B$ adjacent to $a_1$ and $a_2$. 
  Since there is no isolated vertex, there is a vertex $b'\in B$ 
  adjacent to $a_3$. 
  As the degree of $b'$ is two, $b'$ is adjacent to either $a_1$ or
  $a_2$. Therefore, the three vertices of $a_1$, $a_2$ and $a_3$ are
  connected by paths.
  This implies that $G$ is connected since every vertex 
  in $B$ is adjacent to one of $a_1$, $a_2$ and $a_3$. 
  Since $G$ is bipartite and the degree of every vertex in $B$ is two,
  the sum of the degrees of $a_1$, $a_2$ and $a_3$ is even.
  If there are an odd number of odd-degree vertices in $A$, then the sum of
  the degrees of $a_1$, $a_2$ and $a_3$ is odd, contradicting the fact
  that the sum of the degrees of $a_1$, $a_2$ and $a_3$ is
  even. Therefore, the number of odd-degree vertices is zero or two. 

  When there is no odd-degree vertex, then $G$ is Eulerian, and
  contains a trail with $2|B|$ edges such that both endpoints
  coincide with $a_1$.
  When there are two odd-degree vertices, let them be $a_s$ and $a_t$.
  Then, $G$ is semi-Eulerian, and contains a trail with $2|B|$ edges 
  such that one endpoint is $a_s$ and the other endpoint is $a_t$.
\end{proof}

We are now ready for the proof of Theorem \ref{thm:ub}.

\begin{proof}[Proof of Theorem \ref{thm:ub}.]
The proof uses the induction on $k$.
When $k\leq 10$, $R(\trail{k},\trail{k})$ holds from \tablename~\ref{table}.

Now, fix an arbitrary integer $k\geq 11$ and suppose that the statement is true for $k'<k$.
Consider a graph $G=(V,E)$ with $k$ vertices.
For a subgraph $G'$ with $k-1$ vertices of $G$, by induction hypothesis,
either $G'$ or $\overline{G'}$ contains a trail $S$ with $k-1$ vertices.
If $G'$ contains $S$, then $G$ contains $S$ because $G'$ is a subgraph of $G$.
If $\overline{G'}$ contains $S$, then $\overline{G}$ contains $S$ because $\overline{G'}$ is a subgraph of $\overline{G}$.
Therefore, either $G$ or $\overline{G}$ contains $S$.
Without loss of generality, suppose $G$ contains $S$.
Let $S=u_1 e_1 u_2 e_2 \dots e_{k-2} u_{k-1}$ 
where $e_i = \{u_i, u_{i+1}\}$ for all $i\in\{1,2,\dots,k-2\}$,
$U=\{u_1, u_2, \dots, u_{k-1}\}$ be the set of vertices in $S$, and $W=V-U$.
Note that the size of $U$ can be smaller than $k-1$ since some vertices can be identical.

If there exists a vertex $w\in W$ such that $\{u_1,w\}\in E$, 
then $G$ contains the trail $w\{w,u_1\}S$ with $k$ vertices. 
Similarly, if there exists a vertex $w\in W$ such that $\{u_{k-1},w\}\in E$,
then $G$ contains the trail $S\{u_{k-1},w\}w$ with $k$ vertices. 
If there is a vertex $u \in U$ such that $\{u,u_1\}\in E$ is not included in $S$, 
then $G$ contains the trail $u\{u,u_1\}S$ with $k$ vertices. 
Similarly, if there is a vertex $u \in U$ such that $\{u,u_{k-1}\}\in E$ is not included in $S$, 
then $G$ contains the trail $S\{u_{k-1},u\}u$ with $k$ vertices. 
In all of these cases, $G$ contains a trail with $k$ vertices and we are done.

Hence, we only need to consider the cases where the following two conditions are satisfied.
\begin{description}
\item[Condition 1.]
  For every $w\in W$, $\{u_1,w\}\notin E$ and $\{u_{k-1},w\}\notin E$. That is, $\{u_1,w\}\in \overline{E}$ and $\{u_{k-1},w\}\in \overline{E}$.
\item[Condition 2.]
  For every $u\in U$, if $\{u,u_1\}$ is not included $S$, then $\{u,u_1\}\notin E$. That is, $\{u,u_1\}\in \overline{E}$. If $\{u,u_{k-1}\}$ is not included $S$, then $\{u,u_{k-1}\}\notin E$. That is, $\{u,u_{k-1}\}\in \overline{E}$.
\end{description}

We distinguish the cases according to the ``shape'' of $S$.

\paragraph{Case 1: $S$ is a path.}
Since $S$ is a path, $S$ contains no repeated vertex.
Therefore, $|U|=k-1$ and $|W|=1$. 
Let $w$ be the only vertex in $W$. 
Since $S$ contains no repeated vertex, 
for $3\leq i \leq k-1$, the edges $\{u_i,u_1\}$ are not included in $S$. 
Also, for $1\leq i \leq k-3$, the edges $\{u_i,u_{k-1}\}$ are not included in $S$. 
From Condition 2, $\{u_1,u_{k-1}\}\in \overline{E}$ and 
for $3\leq i \leq k-3$, $\{u_1,u_i\}\in \overline{E}$ and $\{u_{k-1},u_i\}\in \overline{E}$. 
Consider a subgraph $G'=(V',E')$ of $\overline{G}$, 
where $V'=V-\{u_2,u_{k-2}\}$ and $E'=\{\{u_1,u_i\},\{u_{k-1},u_i\} \mid i\in\{3,4,\dots,k-3\}\} \cup \{\{u_1,w\}, \{u_{k-1},w\},\{u_1,u_{k-1}\}\}$. Each vertex of $V'$ except $u_1$ is adjacent to $u_1$. Hence, $G'$ is connected. 
Further, since the degree of each vertex in $V'$ except $u_1$ and $u_{k-1}$ is two, and the degrees of $u_1$ and $u_{k-1}$ are $|V|-3$, 
the number of odd-degree vertices in $G'$ is zero or two. 
Therefore, $G'$ is Eulerian or semi-Eulerian. 
Since $G'$ has $2k-7$ edges, $G'$ contains a trail $T$ with $2k-6\geq k$ vertices. 
Since $G'$ is a subgraph of $\overline{G}$, 
we conclude that $\overline{G}$ contains $T$.

\paragraph{Case 2: $S$ is a circuit.}
When $S$ is a circuit, $u_1 = u_{k-1}$.
Therefore, $|U|\leq k-2$ and $|W|=k-|U|\geq 2$. 
Denote the elements of $W$ by $w_1,w_2,\dots,w_{|W|}$.

If there exist $w \in W$ and $u_i \in U$ such that
$\{w, u_i\} \in E$, then
we have a trail
\[
T = w\{w,u_i\}u_i e_i u_{i+1} \dots u_{k-1} e_1 u_{2} \dots u_i
\]
since $u_1 = u_{k-1}$.
Note that $T$ has $k$ vertices.
Therefore, $G$ contains a trail with $k$ vertices.

Hence, 
we only need to consider the situation where $\{w,u\}\notin E$, i.e., $\{w,u\}\in \overline{E}$ for every $w\in W$ and every $u\in U$.
We distinguish two cases according to the comparison of $|U|$ and $|W|$.

\paragraph{Case 2-1: $|U|\geq |W|$.}
Choose two vertices $w_1,w_2 \in W$ arbitrarily, and 
let $V'=U\cup \{w_1,w_2\}$ and $E'=\{\{w_1,u\},\{w_2,u\} \mid u\in U\}\subseteq \overline{E}$.
Consider the subgraph $G'=(V',E')$ of $\overline{G}$.
Then, $G'$ is connected since every vertex in $U$ is adjacent to $w_1$ and $w_2$.
The degree of every vertex in $U$ is two, and 
the degree of $w_1$ and $w_2$ are both $|U|$.
Hence, the number of odd-degree vertices in $G'$ is zero or two.
Therefore, $G'$ is Eulerian or semi-Eulerian. 
Since $G'$ has $2|U|$ edges, 
$G'$ contains a trail $T$ with $2|U|+1\geq |U|+|W|+1 = |U|+(k-|U|)+1=k+1$ vertices. 
Since $G'$ is a subgraph of $\overline{G}$, 
we conclude that $\overline{G}$ contains $T$.

\paragraph{Case 2-2: $|U|\leq |W|$.}
If $|U|<2$, then the number of vertices in $S$ is less than 1, 
which contradicts the fact that the number of vertices in $S$ is $k-1\geq 10$.
Therefore, $|U|\geq 2$.
Choose two vertices $a,b\in U$ arbitrarily, 
and let $V'=W \cup \{a,b\}$ and $E'=\{\{w,a\},\{w,b\}\mid w\in W\}\subseteq \overline{E}$.
Consider the subgraph $G'=(V',E')$ of $\overline{G}$.
Since every vertex in $W$ is adjacent to $a$ and $b$, 
$G'$ is connected. 
The degree of every vertex in $W$ is two, and the degree of $a$ and $b$ are both $|W|$. 
Hence, the number of odd-degree vertices in $G'$ is zero or two.
Therefore, $G'$ is Eulerian or semi-Eulerian. 
Since $G'$ has $2|W|$ edges, 
$G'$ contains a trail $T$ with $2|W|+1\geq |U|+|W|+1 = |U|+(k-|U|)+1=k+1$ vertices.
Since $G'$ is a subgraph of $\overline{G}$,
we conclude that $\overline{G}$ contains $T$.

\paragraph{Case 3: $S$ is not a path or a circuit.}
Since $S$ is not a path, $|U|\leq k-2$.
Since $S$ is not a circuit, $u_1\neq u_{k-1}$.
We distinguish cases according to the size of $U$.

\paragraph{Case 3-1: $|U|=k-2$.}
Since $S$ is a trail with $k-1$ vertices and $|U|=k-2$, 
there is only one vertex $x$ that is used more than once in $S$.
If $x\neq u_1$ and $x\neq u_{k-1}$, then there are at most two vertices adjacent to either $u_1$ or $u_{k-1}$ in $G$.
If $x$ is $u_1$ or $u_{k-1}$, then there are at most four vertices adjacent to either $u_1$ or $u_{k-1}$ in $G$.

Let $U'$ be the set of elements of $U-\{u_1,u_{k-1}\}$ that are not adjacent to either $u_1$ or $u_{k-1}$ in $G$. 
Every vertex $u' \in U'$ satisfies $\{u',u_1\}\notin E$ and $\{u',u_{k-1}\}\notin E$, i.e. $\{u',u_1\}\in \overline{E}$ and $\{u',u_{k-1}\}\in \overline{E}$. 
Further, $|U'|\geq |U-\{u_1,u_{k-1}\}|-4=|U|-2-4=k-8$. 
From Condition 1, for each vertex $w\in W$, we have $\{u_1,w\}\in \overline{E}$ and $\{u_{k-1},w\}\in \overline{E}$.
Let $V'=U'\cup W \cup \{u_1,u_{k-1}\}$, $E'=\{\{w,u_1\},\{w,u_{k-1}\}\mid w\in W\}\cup \{\{u_1,u'\},\{u_{k-1},u'\} \mid u'\in U'\}\subseteq \overline{E}$ 
and consider the subgraph $G'=(V',E')$ of $\overline{G}$. 
Since every vertex in $U'\cup W$ is adjacent to $u_1$ and $u_{k-1}$, 
$G'$ is connected. 
The degree of every vertex in $U'\cup W$ is two, and 
the degree of $u_1$ and $u_{k-1}$ are both $|U'|+|W|$. 
Hence, the number of odd-degree vertices in $G'$ is zero or two. 
Therefore, $G'$ is Eulerian or semi-Eulerian. 
Since $G'$ has $2(|U'|+|W|)$ edges, 
$G'$ contains a trail $T$ with $2(|U'|+|W|)+1\geq 2(k-8+2)+1=2k-11\geq k$ vertices. 
Since $G'$ is a subgraph of $\overline{G}$, we conclude that
$\overline{G}$ contains $T$.

\paragraph{Case 3-2: $|U|\leq \left\lfloor k/2\right\rfloor$.}
From Condition 1, for each vertex $w\in W$, 
we have $\{u_1,w\}\in \overline{E}$ and $\{u_{k-1},w\}\in \overline{E}$. 
Let $V'=W \cup \{u_1,u_{k-1}\}$, $E'=\{\{w,u_1\},\{w,u_{k-1}\}\mid w\in W\}\subseteq \overline{E}$ and consider the subgraph $G'=(V',E')$ of $\overline{G}$.
Since every vertex in $W$ is adjacent to $u_1$ and $u_{k-1}$, 
$G'$ is connected. 
The degree of every vertex in $W$ is two, and 
the degree of $u_1$ and $u_{k-1}$ are both $|W|$. 
Hence, the number of odd-degree vertices in $G'$ is zero or two. 
Therefore, $G'$ is Eulerian or semi-Eulerian. 
Since $G'$ has $2|W|$ edges, $G'$ contains a trail $T$ with $2|W|+1\geq 2\left\lceil k/2\right\rceil+1\geq k+1$ vertices.
Since $G'$ is a subgraph of $\overline{G}$,
we conclude that $\overline{G}$ contains $T$.

\paragraph{Case 3-3: $\left\lfloor k/2\right\rfloor<|U|\leq k-3$.}
By the induction hypothesis, in $G$ or $\overline{G}$, 
there exists a trail $T$ with $|W|\geq 3$ vertices such that every vertex in $T$ is an element of $W$.  
Let $T=w_1 e'_1 w_2 e'_2 \dots e'_{|W|-1} w_{|W|}$ with $e'_i = \{w_i, w_{i+1}\}, i\in\{1,2,\dots,|W|-1\}$ and $W'$ be the set of vertices used in $T$.

We further distinguish two cases according to the containment of $T$ in $G$ or $\overline{G}$.

\paragraph{Case 3-3-1: $T$ is included in $G$.}
Assume that there exists a vertex $u_i\in U$ adjacent to two vertices $w_x,w_y\in W'$ where $w_x\neq w_y, x<y$.
Then, we have a trail
\[
S'=u_1 e_1 u_2 e_2 \dots u_i \{u_i,w_x\} w_x \dots w_y \{w_y,u_i\} u_i e_{i} \dots e_{k-2} u_{k-1}.
\]
The number of vertices of $S'$ is at least $k$.
Thus, we only need to consider the case where, for every vertex $u\in U$, 
there exists at most one element of $W'$ adjacent to $u$ in $G$.

Let $w'_1$, $w'_2$ and $w'_3$ be any three different vertices of $W'$.
For every vertex $u\in U$, there exists at most one element of $W'$ adjacent to $u$ in $G$. 
Then, $u$ is adjacent to at least two vertices of $w_1$,$w_2$, and $w_3$ in $\overline{G}$.
Therefore, $\overline{G}$ has the following bipartite graph $G'$ as a subgraph:
\begin{itemize}
\item The partite sets of $G'$ are $A=\{w'_1,w'_2,w'_3\}$ and $B=U$;
\item The degree of each vertex in $B$ is two.
\end{itemize}
From Lemma \ref{lem}, $G'$ has a trail $X$ with $2|B|$ edges, 
i.e., $2|B|+1=2|U|+1>2\cdot\lfloor k/2 \rfloor+1 \geq k$ vertices.
Since $G'$ is a subgraph of $\overline{G}$, $\overline{G}$ also has $X$.

\paragraph{Case 3-3-2: $T$ is included in $\overline{G}$.}
Let $w'_1$, $w'_2$ and $w'_3$ be any three different vertices of $W'$.
First, assume that in $G$ there exist three vertices $u_x,u_y,u_z\in U - \{u_1, u_{k-1}\}$ that are adjacent to at least two of the vertices $w'_1$, $w'_2$ and $w'_3$. 
Then, one of the graphs in \figurename~\ref{T} always appears as a subgraph of $G$.
In both cases, there exists a cycle $C$ that has a vertex in $U$.
Therefore, we have a trail $S'=u_1 e_1 u_2 e_2 \dots e_{x-1} C e_{x} \dots  e_{k-2} u_{k-1}$, and the number of vertices of $S'$ is at least $k$. 

\begin{figure}[t]
  \centering
  \begin{tikzpicture}[]
    \fill [white] (0,0);
    \fill [black] (-0.5,0.25)  circle (2pt);
    \fill [black] (-0.5,1.25) circle (2pt);
    \fill [black] (-0.5,2.25) circle (2pt);
    \fill [black] (1,0.25) circle (2pt);
    \fill [black] (1,1.25) circle (2pt);
    \fill [black] (1,2.25) circle (2pt);
    \draw[] (-0.5,0.25) -- (1,0.25) -- (-0.5,1.25) -- (1,2.25) -- (-0.5,2.25) -- (1,1.25)  -- cycle;        
    
    \fill [black] (3,0)+(-0.5,0.25)  circle (2pt);
    \fill [black] (3,0)+(-0.5,1.25) circle (2pt);
    \fill [black] (3,0)+(-0.5,2.25) circle (2pt);
    \fill [black] (3,0)+(1,0.25) circle (2pt);
    \fill [black] (3,0)+(1,1.25) circle (2pt);
    \fill [black] (3,0)+(1,2.25) circle (2pt);
    \draw[] (2.5,1.25) -- (4,2.25) -- (2.5,2.25) -- (4,1.25)  -- cycle;        
  \end{tikzpicture}
    
  \caption{Two subgraphs of $G$ that can be constructed by $w'_1,w'_2, w'_3, u_x, u_y$ and $u_z$.}
  \label{T}
\end{figure}
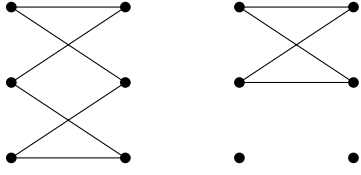

Second, assume there are at most two elements of $U-\{u_1,u_{k-1}\}$ that are adjacent to at least two vertices of $w'_1$, $w'_2$ and $w'_3$ in $G$.
Let $c$ and $d$ be those two vertices of $U-\{u_1,u_{k-1}\}$.
Then, there is a subbipartite graph $G'$ in $\overline{G}$:
\begin{itemize}
\item The partite sets of $G'$ are $A=\{w'_1,w'_2,w'_3\}$ and $B=U-\{u_1,u_{k-1},c,d\}$;
\item The degree of each vertex in $B$ is two.
\end{itemize}
From Lemma \ref{lem}, $G'$ has a trail $X$ with $2|B|+1=2(|U|-4)+1$ vertices.
Let $s,t\in W'$ be the endpoints of $X$.
We now construct a trail $T'$ with $|W|$ vertices such that it only
consists of the vertices and edges used in $T$, and does not start at $t$.
If $w_1\neq t$, then we have $T'=T$.
If $w_1=t$ and $T$ is a circuit, then we have $T'=w_2 e'_2 \dots e'_{|W|-1} w_{|W|} e'_1 w_2$ since $w_1= w_{|W|}$ and $w_1\neq w_2$.
If $w_1=t$ and $T$ is not a circuit, then we have $T'=w_{|W|} e'_{|W|-1} w_{|W|-1} e'_{|W|-2} \dots e'_{1} w_{1}$ since $w_1\neq w_{|W|}$.
Therefore, $T'$ can be constructed.

Since $T$ is included in $\overline{G}$, $T'$ is also included in $\overline{G}$.
Let $w,x$ be the endpoints of $T'$.
Then, we have a trail $Y=X\{t,u_1\}u_1\{u_1,w\}T'\{x,u_{k-1}\}u_{k-1}$ with $2|U|+|W|-5$ edges. 
Hence, $Y$ is a trail with $2|U|+|W|-4=2|U|+(k-|U|)-4=k+|U|-4> k+\lfloor k/2\rfloor-4\geq k$ vertices.
\end{proof}

\section{Conclusion}
From Theorems \ref{thm:lb} and \ref{thm:ub}, we conclude that
$R(\trail{k},\trail{k}) = k$ when $k \leq 6$ and
$2 \sqrt{k}+\Theta(1) \leq  R(\trail{k},\trail{k}) \leq k$ when
$k \geq 7$.

Future work is to find stricter upper and lower bounds.
Another challenge is to find upper and lower bounds of $R(\trail{k},\trail{\ell})$ for any $k$ and $\ell$.

\end{document}